\documentclass[11pt]{article}
\usepackage{mathrsfs}
\usepackage{amsmath}
\usepackage[colorlinks=true, allcolors=blue, urlcolor=black]{hyperref}

\makeatletter
\newcommand{\removelatexerror}{\let\@latex@error\@gobble}
\makeatother
\usepackage{amsmath}
\usepackage{amsthm}
\usepackage{amsfonts}
\usepackage{amssymb}
\usepackage{graphicx}
\usepackage{dsfont}
\usepackage{mathtools}
\usepackage{cleveref}
\usepackage{pgfplots}
\usepackage{bbm}
\usepackage[noend]{algpseudocode}
\usepackage{stmaryrd}

\usepackage{algorithm}
\usepackage{complexity}
\usepackage{enumitem}
\usepackage{tikz}
\usetikzlibrary{positioning}
\usepackage{thm-restate}
\usepackage{float}
\usetikzlibrary{calc}
\usepackage{thm-restate}
\usetikzlibrary{shapes,decorations.markings}
\usepackage{caption}
\usepackage{subcaption}
\usepackage{comment}

\usepackage{lipsum}
\usetikzlibrary{snakes}

\newtheorem{theorem}{Theorem}[section]

\newtheorem{definition}[theorem]{Definition}
\newtheorem{problem}[theorem]{Problem}
\newtheorem{claim}[theorem]{Claim}
\newtheorem{lemma}[theorem]{Lemma}

\newtheorem{corollary}[theorem]{Corollary}

\newcommand{\sk}{\mathsf{sk}}
\newcommand{\FS}{\mathsf{SupportFind}}

\usepackage[english]{babel}
\usepackage[T1]{fontenc}
\usepackage{caption}

\usepackage[margin=1in]{geometry}

\algblockdefx{Withprob}{EndWithprob}[1]{\textbf{with probability} #1 \textbf{do}}{}

\makeatletter
\ifthenelse{\equal{\ALG@noend}{t}}%
  {\algtext*{EndWithprob}}
  {}%
\makeatother

\title{A (Very) Nearly Optimal Sketch for \\ $k$-Edge Connectivity Certificates}
\author{ Pachara Sawettamalya\footnote{Department of Computer Science, Princeton University. Supported by NSF CAREER award CCF-233994. \ \href{mailto:ps3122@princeton.edu}{\url{ps3122@princeton.edu}}}  \and Huacheng Yu\footnote{Department of Computer Science, Princeton University. Supported by NSF CAREER award CCF-233994. \ \href{mailto:yuhch123@gmail.com}{\url{yuhch123@gmail.com}}}}
\date{}

 
\usepackage[backend=biber, doi=false, url=false, isbn=false, minalphanames = 3, maxalphanames = 3, maxbibnames=10]{biblatex}
\addbibresource{main.bib}

\begin{document}
\maketitle



\begin{abstract}
In this note, we present a simple algorithm for computing a \emph{$k$-connectivity certificate} in dynamic graph streams. Our algorithm uses $O(n \log^2 n \cdot \max\{k, \log n \log k\})$ bits of space which improves upon the $O(kn \log^3 n)$-space algorithm of Ahn, Guha, and McGregor (SODA'12). For the values of $k$ that are truly sublinear, our space usage \emph{very nearly} matches the known lower bound  $\Omega(n \log^2 n \cdot \max\{k, \log n\})$ established by Nelson and Yu (SODA'19; implicit) and Robinson (DISC'24). In particular, our algorithm fully settles the space complexity at $\Theta(kn \log^2{n})$ for $k = \Omega(\log n \log \log n)$, and bridges the gap down to only a doubly-logarithmic factor of $O(\log \log n)$ for a smaller range of $k = o(\log n \log \log n)$.

\end{abstract}

\pagenumbering{roman}
\pagenumbering{arabic}

\section{Introduction}

The problem of computing an \emph{edge-connectivity certificate} concerns verifying whether a given graph satisfies a desired level of edge connectivity. Formally, given an $n$-vertex graph $G = (V, E)$ and a target edge-connectivity parameter $k \in \mathbb{N}$, the goal is to produce an object that certifies whether $G$ is $k$-edge-connected or not. In this work, we consider a notion of certificates that provide explicit evidence in either cases.

\begin{definition}[$k$-Connectivity Certificate]
Let $G = (V, E)$ be a graph on $n$ vertices and let $k$ be a positive integer. A \emph{$k$-connectivity certificate} is one of the following:
\begin{enumerate}
\item[$(+)$] If $G$ is $k$-edge-connected, the certificate is a spanning subgraph $H$ of $G$ such that $H$ itself is $k$-edge-connected. 
\item[$(-)$] If $G$ is not $k$-edge-connected, the certificate consists of a cut $(S, \overline{S})$ whose cut size is strictly less than $k$, along with the explicit list of edges crossing the cut denoted by $E(S, \overline{S})$.
\end{enumerate}
\end{definition}

We study the problem of computing a $k$-connectivity certificate in the dynamic graph stream \cite{FeigenbaumKMSZ05} where the input graph $G$ is presented as a sequence of edge insertions and deletions. The algorithm must process the stream in a single pass while using limited memory. At the end of the stream, it is required to output a valid $k$-connectivity certificate, as defined previously.

\begin{problem}[$k$-Connectivity Certificate in Dynamic Streams]
Given a connectivity parameter $k \in \mathbb{N}$ and an $n$-vertex graph $G$ presented as a stream of edge insertions and deletions, design a single-pass streaming algorithm that outputs a valid $k$-connectivity certificate of $G$.
\end{problem}

For the special case of $k = 1$, the problem indeed reduces to computing a spanning forest of the input graph $G$. The well-celebrated work of Ahn, Guha, and McGregor \cite{AhnGM12} shows that a spanning forest can be computed with high probability in the dynamic streaming model using only $O(n \log^3 n)$ space via an algorithm now known as the ``AGM sketch''.\footnote{Nelson and Yu \cite{NelsonY19} subsequently extends the AGM sketch to the regime of arbitrary error probability $\delta$ using space $O(n \log^2{n}\log(n/\delta))$.} For general values of $k$, this idea can be extended by invoking the AGM sketch recursively, using a total space of $O(kn \log^3 n)$.

On the other hand, a space lower bound for certifying positive instances (i.e., the $(+)$ case in the definition) can be derived via combining two prior results. First, Robinson \cite{Robinson23} establishes a lower bound of $\Omega(kn \log^2(n/k))$ bits, even when the algorithm is required to succeed with only constant probability. Separately, Nelson and Yu \cite{NelsonY19} show that even the simpler task of outputting a spanning forest requires $\Omega(n \log^3 n)$ space, thereby establishing the optimality of the AGM sketch for the spanning forest problem. While this lower bound does not immediately translate to the question of $k$-connectivity certificate—due to differences in problem formulation—their techniques can be adapted in a white-box manner to show that any dynamic streaming algorithm that certifies positive instances must also use $\Omega(n \log^3 n)$ space.\footnote{More precisely, to prove the $\Omega(n\log^3 n)$-space lower bound for spanning forest computation, Nelson and Yu construct a hard distribution over input graphs such that any algorithm computing a spanning forest must use at least $\Omega(n\log^3 n)$ bits of space. Upon closer inspection, a graph $G$ drawn from this hard distribution is, with overwhelming probability, $n^{\Omega(1)}$-connected, thereby satisfying the premise of our positive instances.} Together, these lower bounds leave an open gap of roughly $O(\min\{k, \log n\})$ compared to the $O(kn\log^3{n})$-space algorithm by \cite{AhnGM12}.

\paragraph{Our Contributions.} We present a simple algorithm for computing a $k$-connectivity certificate with improved space complexity in the dynamic streaming model.

\begin{theorem}
There is a single-pass dynamic streaming algorithm that computes a $k$-connectivity certificate using $O(n \log^2 n \cdot \max\{k, \log n \log k\})$ bits of space. The algorithm succeeds with probability at least $1 - n^{-5}$.
\label{main_thm}
\end{theorem}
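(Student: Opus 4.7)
The plan is to follow the classical Nagamochi--Ibaraki forest decomposition: iteratively compute spanning forests $F_1, F_2, \ldots, F_k$, where $F_i$ is a spanning forest of $G \setminus (F_1 \cup \cdots \cup F_{i-1})$. A standard inductive argument shows that $H = \bigcup_{i \le k} F_i$ is a $k$-connectivity certificate: any cut of size at least $k$ in $G$ retains at least one edge in each $F_i$ (since $F_i$ is spanning on the residual graph, which still has a cut edge across $(S,\bar{S})$ as long as $i \le k$), while smaller cuts are preserved intact. The task therefore reduces to computing this decomposition in the dynamic streaming model.

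To implement this in streaming, I would maintain a single AGM-style linear sketch. For each vertex $v$ and each of $L = O(\log n)$ Boruvka levels, I would store a linear sketch of the signed incidence vector $f_v$. Because the sketch is linear, once $F_1, \ldots, F_{i-1}$ have been extracted and stored explicitly in the output, we can \emph{subtract} their contribution from the sketch and continue to use the same sketch to compute $F_i$. The per-vertex-per-level sketch should support up to $k$ sequential ``find a nonzero coordinate'' queries on the evolving residual vector. Concretely, one can take a $k$-fold $\ell_0$-sampling / sparse-recovery structure of size roughly $O(k \log n + \log^2 n \cdot \log k)$ bits, where the first term encodes the $k$ recovered edges and the second term boosts the per-query failure probability so that all queries succeed simultaneously.

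Summing over $n$ vertices and $L = O(\log n)$ Boruvka levels yields total space
\[
  n \cdot L \cdot \bigl( k \log n + \log^2 n \log k \bigr) \;=\; O\!\bigl( n \log^2 n \cdot \max\{k,\, \log n \log k\} \bigr),
\]
as required. The success probability follows by setting the per-query failure probability to $1/\mathrm{poly}(nk)$, so that a union bound over the $O(kn \log n)$ queries yields overall failure at most $n^{-5}$.

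The principal obstacle is the adaptive nature of the iterations: the residual graph queried in iteration $i$ depends on the random outputs of iterations $1, \ldots, i-1$, so a naive non-adaptive analysis of $\ell_0$-samplers does not apply directly. The $\log k$ amplification factor in each sparse-recovery sketch is precisely what lets us union-bound over all realized query trajectories, and the linearity of the sketch guarantees correctness under the explicit subtraction of $O(kn)$ edges spread across the $k$ iterations.
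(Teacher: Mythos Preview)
Your proposal has a genuine gap in the treatment of adaptivity, and the fix you sketch does not work. You correctly identify that the residual graph in iteration $i$ depends on the random outputs of iterations $1,\ldots,i-1$, which were produced by the \emph{same} sketch you now wish to query. You then assert that a $\log k$ amplification in the per-query failure probability ``lets us union-bound over all realized query trajectories.'' This is not valid: the set of possible trajectories $(F_1,\ldots,F_{i-1})$ has size exponential in $n$, not $\mathrm{poly}(nk)$, so a $1/\mathrm{poly}(nk)$ per-query failure is nowhere near small enough for a union bound. More fundamentally, once the query is correlated with the sketch's internal randomness, you cannot union-bound over the ``realized'' queries at all; you would need the sketch to succeed against an adaptive adversary, which standard $\ell_0$-samplers do not. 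This is precisely why the original AGM construction pays for $k$ \emph{independent} copies and incurs the $O(kn\log^3 n)$ bound that the theorem improves. The per-vertex-per-level sketch you posit, of size $O(k\log n + \log^2 n\log k)$ supporting $k$ sequential adaptive $\ell_0$-samples, is not a known object, and the two-line justification you give does not substantiate its existence.

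The paper takes a completely different route that sidesteps adaptivity rather than confronting it. It maintains $1+\log_2 k$ \emph{independent} sketches $\mathcal{M}_0,\ldots,\mathcal{M}_{\log_2 k}$, where $\mathcal{M}_r$ is a $2^r$-support-find structure. Iteration $r$ starts from a $2^{r-1}$-edge-connected subgraph $G_{r-1}$ (determined entirely by $\mathcal{M}_0,\ldots,\mathcal{M}_{r-1}$, hence independent of $\mathcal{M}_r$), enumerates all cuts of $G_{r-1}$ of size below $2^r$, and queries $\mathcal{M}_r$ once per cut for $2^r$ crossing edges. Two ingredients make this work where yours does not: (i) since $G_{r-1}$ is already $2^{r-1}$-connected, every such cut is a $2$-approximate min-cut, and Karger's bound limits their number to $O(n^4)$, so a plain non-adaptive union bound over $\mathrm{poly}(n)$ queries suffices at each level; (ii) because connectivity \emph{doubles} at each level, the sketch sizes form a geometric series dominated by the last term, which is what yields the $\max\{k,\log n\log k\}$ factor without a multiplicative $\log k$ overhead. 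Neither the Nagamochi--Ibaraki peeling nor any reuse of a single sketch across adaptive rounds appears in the argument.
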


This result very nearly closes the aforementioned gaps in space complexity. In particular, when the value of $k$ satisfies $\Omega(\log n \log \log n) \leq k \leq n^{1-\Omega(1)}$, our algorithm matches the lower bound of \cite{Robinson23} up to constant factors, settling the space complexity at $\Theta(kn \log^2{n})$. In the regime where $k = o(\log n \log \log n)$, we achieve an upper bound of $O(n \log^3{n} \log{k})$, leaving only a gap of $O(\log k) \leq O(\log \log n)$ compared to the lower bound by \cite{NelsonY19}.

We also highlight a by-product of our algorithm in the distributed sketching model. In this setting, the $n$-vertex graph $G$ is partitioned among $n$ players, each holding a single vertex along with its incident edges. All players simultaneously send messages to a central referee, who then computes a $k$-connectivity certificate of $G$. Our algorithm from \Cref{main_thm} extends naturally to this setting, as captured in the following corollary. For brevity, we omit the details of this extension.

\begin{corollary}
There is a randomized distributed sketching algorithm in which each player sends a message of length $O(\log^2 n \cdot \max\{k, \log n \log k\})$ to a central referee. With probability at least $1 - n^{-5}$, the referee can compute a $k$-connectivity certificate of the input graph.
\label{main_dist}
\end{corollary}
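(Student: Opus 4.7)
The plan is built on the classical Nagamochi--Ibaraki fact that a valid $k$-connectivity certificate for $G$ can be taken to be the edge-disjoint union $F_1\cup\cdots\cup F_k$ of iterative spanning forests, where $F_i$ is a spanning forest of $G\setminus(F_1\cup\cdots\cup F_{i-1})$. On positive instances this union is itself $k$-edge-connected, and on negative instances the Bor\r uvka procedure used to build the $F_i$'s must stall at some component $C$ whose cut in $G$ has fewer than $k$ edges, so the sketch must be strong enough both to guide the Bor\r uvka merges and to list all crossing edges once such a small cut is found.

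I would attach to each vertex $v$ a stack of $L=O(\log n)$ linear subsampled sketches $S_v^{(0)},\ldots,S_v^{(L)}$, where $S_v^{(\ell)}$ retains each incident edge independently with probability $2^{-\ell}$. Linearity makes $\bigoplus_{v\in C}S_v^{(\ell)}$ a sketch of the subsampled edges crossing the cut $(C,\overline{C})$ for any $C$, and lets us subtract already-recovered forests $F_1,\ldots,F_{i-1}$ to work with the residual graph. In place of the single $\ell_0$-sampler used inside the AGM sketch for $k=1$, each $S_v^{(\ell)}$ is a \emph{multi-edge sparse-recovery} structure that reports up to $\Theta(k)$ subsampled crossing edges with failure probability $\delta$, at cost roughly $O\bigl((k+\log(1/\delta))\log n\bigr)$ bits. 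The recovery phase then simulates $k$ sequential Bor\r uvka executions: in the $i$-th execution, each current component $C$ queries the level at which its subsampled cut has constant expected size to extract one outgoing edge for $F_i$, and a query at level $0$ reveals the full small cut whenever a component stalls.

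The space breaks down as $n$ vertices $\times\,L$ levels $\times\,O((k+\log(1/\delta))\log n)$ bits per sketch. Setting $\delta = 1/(\mathrm{poly}(n)\cdot k\log n)$ to union-bound over the $O(k\log n)$ component queries contributes an additive $\log n+\log k$ term to each sketch and, after multiplying through, yields the advertised $O(n\log^2 n\cdot\max\{k,\log n\log k\})$ bound. The step I expect to be the main obstacle is controlling the sparse-recovery budget across queries: because the Bor\r uvka level chosen for each query is data-dependent and the sparsity budget at each level is shared across the $k$ iterations, the total load at any single level is itself a random variable; a careful level-by-level Chernoff argument---together with verifying that linearly subtracting previously recovered edges never drives the remaining signal out of the sparsity regime---is what establishes correctness, and appears to be the source of the $\log k$ slack in the final bound.
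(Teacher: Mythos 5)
Your proposal follows the Nagamochi--Ibaraki $k$-forest decomposition (as AGM do) and tries to avoid the $k$ independent copies by packing a $\Theta(k)$-sparse recovery structure into each subsampling level of a single AGM-style sketch, then running $k$ sequential Bor\r uvka passes against that one shared sketch. This is \emph{not} the route the paper takes, and the route as you describe it has a genuine gap that is not merely a technical Chernoff issue.

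The central problem is adaptivity. The $i$-th Bor\r uvka execution queries component cuts that depend on $F_1,\ldots,F_{i-1}$, which in turn are functions of the sketch's internal randomness. Once you reuse the \emph{same} hash functions and sparse-recovery structures across all $k$ executions, the success probability of the $i$-th batch of queries conditioned on the outcomes of the first $i-1$ batches is no longer $1-\delta$, and a union bound over ``$O(k\log n)$ component queries'' simply does not apply. AGM sidestep this by paying for $k$ fresh copies; you are trying to avoid that payment without providing a replacement argument. Your closing paragraph gestures at the difficulty (``the sparsity budget at each level is shared across the $k$ iterations''), but the fix you sketch --- a level-by-level Chernoff bound --- addresses load, not the reuse of randomness across dependent queries. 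Relatedly, the space accounting you give ($n\cdot L\cdot O((k+\log(1/\delta))\log n)$ with $L=O(\log n)$ and $\log(1/\delta)=O(\log n+\log k)$) actually evaluates to $O(n\log^2 n\,\max\{k,\log n\})$, which would beat the paper's bound for small $k$ and match the known lower bound exactly; that should itself be a red flag that the analysis is not closing.

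The paper's construction avoids this entirely by abandoning the $k$-forest view. It keeps $1+\log_2 k$ \emph{independent} $\FS$ sketches $\mathcal{M}_0,\ldots,\mathcal{M}_{\log_2 k}$, where $\mathcal{M}_r$ is a $\FS(2^r,\cdot,\cdot,\cdot)$ sketch. $\mathcal{M}_0$ yields a spanning tree, and each $\mathcal{M}_r$ is used once, to ``double'' the connectivity of the subgraph $G_{r-1}$ built so far from $2^{r-1}$ to $2^r$ by fixing every cut of $G_{r-1}$ of size less than $2^r$. The queries made to $\mathcal{M}_r$ depend only on $G_{r-1}$, i.e.\ only on the randomness of $\mathcal{M}_0,\ldots,\mathcal{M}_{r-1}$, so they are non-adaptive with respect to $\mathcal{M}_r$ and a union bound is legitimate. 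The second key ingredient you are missing is the bound on how many cuts need fixing: because $G_{r-1}$ is already $2^{r-1}$-connected, every cut of size below $2^r$ is a $2$-approximate min-cut of $G_{r-1}$, so by Karger--Stein there are at most $(2n)^4$ of them. This polynomial count is what makes the per-query error $n^{-10}$ sufficient. Finally, the geometric growth of the $\FS$ capacities $2^r$ is what makes the total space telescope to $O(n\log^2 n\cdot\max\{k,\log n\log k\})$; the $\log k$ factor in the small-$k$ regime comes from the $O(\log k)$ levels each paying $\Theta(\log n)$ for the error-probability floor of $\FS$, not from any subtlety in a shared-budget Chernoff argument. The corollary itself is obtained by distributing the per-vertex sub-sketches $\sk^{(r)}(v)$ (and the AGM per-vertex sketch) to the players; the referee simply runs the same query algorithm.
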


\subsection{Brief descriptions of our algorithm}

\paragraph{Previous Approaches.}  Nagamochi and Ibaraki \cite{NagamochiI92} observed that a $k$-connectivity certificate can be constructed from a graph $F = F_1 \cup \cdots \cup F_k$, where each $F_i$ is an arbitrary spanning forest of the remaining graph $G \setminus (F_1 \cup \cdots \cup F_{i-1})$. More precisely, it can be shown that $G$ is $k$-edge connected if and only if $F$ is $k$-edge-connected. The seminal work of \cite{AhnGM12} utilizes this construction in a clever way: by maintaining $k$ independent copies of the AGM sketch, one for recovering each $F_i$, we can obtain a $k$-connectivity certificate using $O(kn \log^3 n)$ space. 

A notable feature of the AGM sketch heavily exploited in this approach is its \emph{updatability}: after extracting the first $i-1$ spanning forests $F_1, \ldots, F_{i-1}$ from the first $i-1$ sketches, the ``fresh'' $i^{\text{th}}$ sketch can be updated by removing these forests, thereby querying it yields the next spanning forest $F_i$. Equivalently, this approach maintains $k$ independent sketches, each successively used to increase the edge-connectivity by one.

\paragraph{Our Approach.} Our algorithm takes a different avenue. Instead of maintaining $k$ subsketches as in~\cite{AhnGM12}, we use only $1 + \log_2 k$ subsketches, denoted $\mathcal{M}_0, \ldots,\mathcal{M}_{\log_2 k}$.  The first subsketch $\mathcal{M}_0$ is a standard AGM sketch that yields a spanning tree $H$, serving as a $1$-edge-connected subgraph of $G$.  Each subsequent subsketch $\mathcal{M}_i$ is then used for \emph{doubling} the edge connectivity of $H$ from $2^{i-1}$ to $2^i$, until it reaches $k$. It achieves that by identifying all small cuts of $H$ whose size is strictly less than $2^i$ and, for each such cut, ``fixing'' it by determining a set of $2^i$ edges in $G$ that cross the cut (if existed) and adding them to $H$.

Although our approach requires only a logarithmic number of sketches, each subsketch is required to perform an arguably much more demanding task: it needs to fix all small cuts by determining a set of edges crossing the cut. To this end, we utilize a known dynamic data structure called \emph{$k$-support-finds}: Given $k \geq \Omega(\log n)$ and a vector $x \in \mathbb{R}^m$ undergoing updates, a $k$-support-find sketch can recover up to $k$ nonzero coordinates of $x$ using $O(k \log^2 n)$ space.\footnote{See \Cref{sec:findsupport} for a formal definition.}  
Each $\mathcal{M}_i$ is implemented as a (variant of) $2^i$-support-find sketch so that fixing a small cut reduces to recovering $2^i$ supports of a corresponding vector.

Roughly speaking, the ``savings'' of our algorithm stem from two sources. First, to achieve $1/\poly(n)$ failure probability, a $k$-support-find sketch uses $O(k \log^2 n)$ space for large $k$, while a 1-support-find sketch is known to require $\Theta(\log^3 n)$ space \cite{KapralovNPWWY17}. In other words, for large $k$, by directly implementing a $k$-support find sketch instead of $k$ copies of a 1-support-find sketch, we have gained an $O(\log n)$ factor improvement.

Second, the subsketches are organized in a geometric hierarchy.  
Consequently, the total space usage does \emph{not} suffer the $O(\log{k})$ blowup and is dominated by the last sketch $\mathcal{M}_{\log_2 k}$. The doubling scheme also provides another key advantage: each subsketch $\mathcal{M}_i$ needs to fix only $\mathrm{poly}(n)$ small cuts (each being a 2-approximation of a minimum cut), rather than a trivial bound of $2^n$. This phenomenon gracefully allows a straightforward union bound over all the failure probabilities.

\subsection{Other related works}

Closely related to the edge-connectivity certificate problem is its \emph{value} version: determining the \emph{edge-connectivity} of a graph $G$, i.e., the maximum integer $k$ for which $G$ is $k$-edge-connected. This value coincides with the \emph{global minimum cut} of $G$, one of the most extensively studied problems in graph algorithms. Such a problem has been investigated across a wide range of computational models, including the sequential model~\cite{Karger93, KargerS96, Karger99, BenczurK15, KawarabayashiT19, Saranurak21, ChalermsookHNSS22, GoranciHNSTW23}, distributed models~\cite{NanongkaiS14, DagaHNS19, Parter19, Ghaffari020, Ghaffari0T20}, streaming and semi-streaming models~\cite{AhnGM12, MukhopadhyayN20, AssadiD21, GhoshS24}, and various query models~\cite{RubinsteinSW18, ApersEGLMN22, AnandSW25, kenneth-mordoch2025}.

It is also worth highlighting the \emph{vertex-connectivity} problem—a natural counterpart to edge-connectivity—which has recently received much traction ~\cite{NanongkaiSY19, LiNPSY21, SaranurakY22, JiangM23, BlikstadJMY25, JiangNSY25, JiangMY25}. In the dynamic streaming setting, improving upon \cite{GuhaMT15}, Assadi and Shah \cite{AssadiS23} gave a single-pass streaming algorithm for computing a $k$-vertex-connectivity certificate using $O(kn \log^4 n)$ space, along with a near-matching lower bound of $\Omega(kn)$. Narrowing this remaining polylogarithmic gap, analogous to the approach taken in our work, remains an intriguing open problem.

\section{Useful Subroutine: $\FS$}
\label{sec:findsupport}

We describe a subroutine that is used repeatedly in our algorithm, which we refer to under the umbrella term “support-finds.” This subroutine resembles the classical $\ell_0$-sampler and the universal relation problem (see, e.g., \cite{KarchmerRW95, FrahlingIS08, JowhariST11, KapralovNPWWY17}), but is adapted to a query-based setting. Specifically, rather than sampling from a single vector, the algorithm operates over a collection of vectors and responds to queries on subsets of this collection.

\begin{definition}[SupportFind]
Let $n,m,$ and $k$ be positive integers with $m \leq  \mathrm{poly}(n)$, and $\delta \in (0,1)$ be the error parameter. The dynamic streaming problem $\FS(k, n, m, \delta)$ is defined as follows:

Let $x_1, \dots, x_n \in \mathbb{Z}^m$ be vectors, each initialized to $0^m$ at the beginning of the stream. Each stream updates are of the form $(i, \Delta) \in [n] \times \mathbb{Z}^m$, where $\|\Delta\|_\infty \leq \mathrm{poly}(n)$ and the update modifies $x_i \leftarrow x_i + \Delta$. At the end of the stream, given a query set $S \subseteq [n]$, the algorithm must return $\min\{k, \|x_S\|_0\}$ distinct indices from the support of $x_S$, with probability at least $1 - \delta$, where $x_S := \sum_{i \in S} x_i$.
\end{definition}

It turns out that $\FS$ can be computed very efficiently in space.

\begin{lemma}
There exists a sketching algorithm that computes $\FS(k, n, m, \delta)$ using space $O(tn \log^2 m)$, where $t = O(\max\{k, \log(1/\delta)\})$. Specifically, there exists a sketch $\mathcal{M}$ consisting of $n$ sub-sketches $\sk(x_1), \dots, \sk(x_n)$, and a query algorithm $\mathcal{A}$ with the following properties:
\begin{enumerate}
    \item Each $\sk(x_i)$ is a sketch of vector $x_i$ that supports a sequence of updates to $x_i$. Moreover, each $\sk(x_i)$ uses $O(t \log^2 m)$ bits of space.
    \item Upon querying a set $S \subseteq [n]$, the algorithm $\mathcal{A}$ operates on $\sk(x_1), \dots, \sk(x_n)$ and $S$, and returns $\min\{k, \|x_S\|_0\}$ distinct indices in the support of $x_S$ with probability at least $1 - \delta$.
\end{enumerate}
\label{lem:find_support}
\end{lemma}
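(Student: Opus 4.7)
The plan is to exploit linearity of sketches to reduce the per-query task to single-vector support-finding. Concretely, I would design each $\sk(x_i)$ as $A x_i$ for a fixed random linear map $A \in \mathbb{Z}^{D \times m}$ that is shared across all vectors (the shared randomness is $O(\log m)$-wise independent hashing and can be stored implicitly). Then $\sum_{i \in S} \sk(x_i) = A x_S$ is itself a valid sketch of the query vector $x_S$, so the algorithm $\mathcal{A}$ only needs to compute this sum and invoke a single-vector recovery routine on it. This also makes the updates trivially supportable: an update $(i,\Delta)$ just adds $A\Delta$ to $\sk(x_i)$.

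For the single-vector recovery routine, I would use the standard two-stage construction that underlies classical $\ell_0$-samplers. First, I take as a primitive an \emph{exact $t$-sparse recovery sketch}: a linear map of dimension $O(t \log m)$ that, whenever the input vector has at most $t$ nonzero coordinates, recovers the input exactly with failure probability $m^{-\Omega(1)}$. Any of the standard constructions works here—hashing into $O(t)$ buckets, each augmented with polynomial fingerprints that identify whether the bucket is singleton, or an algebraic Prony/Reed--Solomon-style syndrome over a prime modulus of size $\mathrm{poly}(m)$. Since $\|\Delta\|_\infty \leq \mathrm{poly}(n)$ and $m \leq \mathrm{poly}(n)$, every coordinate of $x_S$ fits in $O(\log m)$ bits, so each bucket/syndrome coordinate uses $O(\log m)$ bits.

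Next I wrap this primitive in a subsampling layer. Pick a hash function and use it to define a nested chain $T_0 \supseteq T_1 \supseteq \cdots \supseteq T_{L-1}$ of subsets of $[m]$ with $|T_j| \approx m/2^j$ and $L = O(\log m)$. For each level $j$, maintain one $t$-sparse recovery sketch of $x_{|T_j}$. At query time, try each level; at whichever level $j^\star$ has $\|x_S\restriction_{T_{j^\star}}\|_0 \in [1, t]$, the primitive returns the exact support of this restriction, and the algorithm outputs $\min\{k, \|x_S\|_0\}$ of the recovered coordinates. A standard Chernoff-type argument shows that such a level exists as long as $\|x_S\|_0 \geq 1$, and that at that level the restriction is $\Theta(t)$-sparse; taking $t = \Theta(\max\{k, \log(1/\delta)\})$ makes the overall failure probability at most $\delta$ after union-bounding over the $L$ levels and the sparse-recovery failures. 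The per-vector space is $L = O(\log m)$ levels times $O(t \log m)$ bits per level, i.e., $O(t \log^2 m)$ bits, and summing over $n$ vectors yields the claimed $O(tn \log^2 m)$ total.

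The main obstacle I anticipate is the calibration of $t$ and the union-bound accounting. One has to simultaneously guarantee (i) existence of a level at which the subsampled restriction is nonempty, (ii) sparsity at that level is $O(t)$ so the primitive's correctness precondition holds, and (iii) the failure probabilities of the primitive across all $L$ subsampling levels sum to at most $\delta$. It is the third requirement that forces the $\log(1/\delta)$ term and produces the $t = \Theta(\max\{k, \log(1/\delta)\})$ scaling; everything else follows from standard sparse-recovery and subsampling technology.
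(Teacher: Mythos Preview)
Your proposal is essentially the paper's proof: per-vector linear sketches combined by linearity to reduce to single-vector support-finding, which is then solved by geometric subsampling layered over an exact $t$-sparse recovery primitive, with $t = \Theta(\max\{k,\log(1/\delta)\})$. The only slip is the claimed $O(\log m)$-wise independence for the subsampling hash; to get the $e^{-\Omega(t)}$ Chernoff tail you need $O(t)$-wise independence (as the paper notes, citing \cite{SchmidtSS95}), but the resulting $O(t\log m)$-bit seed is still dominated by the $O(t\log^2 m)$ sketch size, so the space bound is unaffected.
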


As a crucial remark, we observe that our sketch $\mathcal{M}$ gracefully supports \emph{non-adaptive} queries. That is, given $q$ non-adaptive sets $S_1, \dots, S_q$, the algorithm $\mathcal{A}$ answers each individual query correctly with probability at least $1 - \delta$. By a union bound, the probability that all $q$ queries are answered correctly simultaneously is at least $1 - q\delta$.

It is important to emphasize that we do \emph{not} claim any novelty regarding \Cref{lem:find_support}, as it closely follows the works of \cite{JowhariST11} and \cite{KapralovNPWWY17}. For completeness, we provide the proof below. In our actual algorithm for $k$-edge connectivity certificates, we shall apply the $\FS$ sketch in a black-box way. We encourage readers to skip ahead to \Cref{sec:alg} where we discuss the algorithm, and revisit the proof of \Cref{lem:find_support} later, if needed.

\subsection{Proof of \Cref{lem:find_support}}

We first discuss a subproblem called \emph{$\ell$-sparse recovery}. In this problem, we are given an additional parameter $\ell \in \mathbb{Z}^+$ and must output $x$ exactly whenever $\|x\|_0 \leq \ell$. This task can, in fact, be accomplished \emph{deterministically} using $O(\ell \log m)$ bits of space. More importantly, it suffices to maintain a sketch in the form of a vector $Ax$, where $A \in \mathbb{F}_p^{O(\ell) \times m}$ for a prime $p \leq \mathrm{poly}(m)$ is called a \emph{sketching matrix}. Specifically, we want the sketching matrix $A$ to be equipped with the property that $Ay \neq 0$ for any $y \ne 0^m$ with $\|y\|_0 \leq 2\ell$. This shall ensure that the mapping $x \mapsto Ax$ is injective over the domain of $\ell$-sparse vectors; hence, one can uniquely recover such $x$ from $Ax$. One concrete choice is to let $A$ be the transpose of a Vandermonde matrix over $\mathbb{F}_p$. 

We emphasize on the fact that such sketches are \emph{linear} as it is simply a mapping $x \mapsto Ax$. This means they support linear operations—e.g., the sketch of $x + y$ is simply the sum of the sketches of $x$ and $y$, etc. This linearity turns out to be the key defining feature of the upcoming algorithms.

Next, we restrict our attention to a special case of Lemma~\ref{lem:find_support} where we consider a single vector $x \in \mathbb{Z}^m$ which is initialized as $0^m$ and updated over time. At the end of the stream, we wish to recover $\min\{k, \|x\|_0\}$ distinct indices in the support of $x$ with probability at least $1 - \delta$. We claim that this problem can be solved using $O(t \log^2 m)$ bits of space, where $t = \max\{k, C \log(1/\delta)\}$ for a suitably large constant $C > 0$. It is also worth mentioning that \cite{KapralovNPWWY17} shows that the sketch size can be slightly improved to $O(t \log^2(m/t))$ bits if we are promised that $\|x\|_\infty \leq O(1)$. However, this subtle improvement turns out to be indifferent to the implementation of our main algorithm.

\paragraph{Sketch description.} 
For now, suppose we have a perfect probabilistic hash function $h : [m] \rightarrow [\log_2 m]$ such that $\Pr[h(i) = j] = 2^{-j}$ for each $i \in [m]$ and $j \in [\log_2 m]$. We also note that the total probability mass of $h$ sums to only $1-\frac{1}{m}$. To correct this, we allocate the remaining $\frac{1}{m}$ mass to the bottom-most level $j = \log_2 m$. Our sketch then proceeds as follows:
\begin{enumerate}
    \item Let $A \in \mathbb{F}_p^{O(t) \times m}$ be a sketching matrix with respect to $(32t)$-sparse recovery.
    \item For each level $j \in [\log_2 m]$, we maintain $Ax^{(j)}$, where $x^{(j)}$ denotes the restriction of $x$ to coordinates $i$ with $h(i) = j$ (i.e., all other coordinates are zeroed out).
\end{enumerate}

\paragraph{Updating the sketch.} 
For any update to index $i$ of the form $x_i \leftarrow x_i + u$, we compute $j = h(i)$ and update $Ax^{(j)} \leftarrow Ax^{(j)} + u \cdot Ae_i$, using the linearity of the sketch.

\paragraph{Space usage.} 
There are two primary sources of space complexity that we must account for. The first is the sketch size. There are $O(\log m)$ levels, and each level stores a sketch for $(32t)$-sparse recovery using $O(t \log m)$ bits. Hence, the total space used for sketches is $O(t \log^2 m)$.

The second source of space complexity arises from lifting the assumption of a “perfect” hash function $h$. Instead, we sample each $h(i)$ from an $O(t)$-wise independent hash family, which incurs an additional space cost of $O(t \log n)$ bits to store throughout the stream.\footnote{This suffices for our Chernoff bound analysis; see Theorem 5 in \cite{SchmidtSS95}.} However, we must also ensure that the resulting distribution over $h(i)$ is an $O(t)$-wise independent \emph{geometric} distribution. To simulate this, we proceed as follows: draw $x_1, \dots, x_m$ from an $O(t)$-wise uniform distribution over $[m]$, and for each $i$, assign $h(i) \leftarrow \lfloor \log_2(m / x_i) \rfloor + 1$.

\paragraph{Answering a query.} 
To recover $\min\{k, \|x\|_0\}$ distinct support indices of $x$, we proceed as follows:
\begin{enumerate}
    \item For $j = \log_2 m, \ldots, 1$:
    \begin{enumerate}
        \item Query $x^{(j)}$ with $(32t)$-sparse recovery. Let the answer be $a_j$.
        \item If $\|a_j\|_0 \geq t$, return $k$ arbitrary coordinates from the support of $a_j$ and terminate.
    \end{enumerate}
    \item If no level $j$ satisfies $\|a_j\|_0 \geq t$, compute $\hat{a} = \sum_{j\in[\log_2{m}]} a_j$ and return $\min\{k, \|\hat{a}\|_0\}$ supports of $\hat{a}$.
\end{enumerate}

\paragraph{Analysis.} 
We analyze correctness of algorithm by considering two cases:

\begin{enumerate}
    \item $\|x\|_0 \leq 2t$. In this case, for each level $j$, its support size is $\|x^{(j)}\|_0 \leq \|x\|_0 \leq 2t \leq 32t$, so each $x^{(j)}$ is recovered exactly via $(32t)$-sparse recovery. In other words, we have $a_j = x^{(j)}$ for every level $j$. If any level contains support size at least $t \geq k$, we output $k$ of them from Line 1a. Otherwise, we fully recover $\hat{a} = \sum_{j\in[\log_2{m}]} a_j = \sum_{j\in[\log_2{m}]} x^{(j)} = x$ and return any of its $\min\{k, \|x\|_0\}$ supports.

    \item $\|x\|_0 \geq 2t$. In this case, there exists a unique nonnegative integer $j^*$ such that $2t \leq 2^{-j^*} \|x\|_0 < 4t$. Let $E$ denote the event that $\|x^{(j)}\|_0 \geq 32t$ for some $j \geq j^*$, and let $F$ denote the event that $\|x^{(j^*)}\|_0 \leq t$. Observe that in the events that $E$ and $F$ did \emph{not} occur, the algorithm shall terminate at some iteration between $j = \log_2{m}$ and $j =j^*$ and produces the correct output. Thus, it suffices to bound the probability that neither $E$ nor $F$ occurs.

    For $F$, note that $\mathbb{E}\left(\|x^{(j^*)}\|_0\right) = 2^{-j^*}\|x\|_0  \geq 2t.$ By a Chernoff bound, we have $$\Pr(F) = \Pr\left( \|x^{(j^*)}\|_0 \leq t \right) \leq e^{-\Omega(t)}.$$

    For $E$, consider any $j \geq j^*$. We have $\mathbb{E}\left(\|x^{(j)}\|_0\right) = 2^{-j} \|x\|_0 \leq (4t) \cdot 2^{-(j - j^*)}$. Applying a Chernoff bound, we have $\Pr\left(\|x^{(j)}\|_0 \geq 32t\right) \leq \left(8 \cdot 2^{j - j^*}\right)^{-\Omega(t)}.$\footnote{We use the following version of the Chernoff bound: for $\lambda \geq 2e-1$, the upper tail probability is $\lambda^{-\Omega(\lambda \mu)}$ where $\mu$ is the expected sum of the independent Bernoulli random variables.} By a union bound over all $j \geq j^*$, we have
    $$ \Pr(E) \leq \sum_{j \geq j^*} \left(8 \cdot 2^{j - j^*}\right)^{-\Omega(t)} \leq e^{-\Omega(t)}.$$
    Combining both bounds, we get $\Pr(E \cup F) \leq e^{-\Omega(t)} \leq \delta,$ provided $t \geq C \log(1/\delta)$ for a sufficiently large constant $C > 0$.
\end{enumerate}

\paragraph{From a single vector to $n$ vectors.}
We now extend the construction to match the query version of Lemma~\ref{lem:find_support}. For each vector $x_i$, we define $\sk(x_i)$ to be the sketch as described above. The total space becomes $n \cdot O(t \log^2 m) = O(tn \log^2 m)$ as desired. To answer a query for a subset $S \subseteq [n]$, we use the linearity of the sketch: since $x_S = \sum_{i \in S} x_i$, we can compute $\sk(x_S) = \sum_{i \in S} \sk(x_i)$. The resulting sketch behaves identically to the single-vector case but for $x_S$, and thus we recover $\min\{k, \|x_S\|_0\}$ distinct support indices of $x_S$ with probability at least $1 - \delta$.

\section{Our Algorithm}
\label{sec:alg}

In this section, we describe our streaming algorithm for computing a $k$-connectivity certificate, thereby proving \Cref{main_thm}.

\subsection{Sketch description}

We adopt the framework of Ahn, Guha, and McGregor~\cite{AhnGM12} by representing a graph as a collection of signed vertex–edge incidence vectors. Specifically, let $G$ be a graph over a vertex set $V$ given as a stream of edge insertions and deletions, and let $\prec$ be a fixed total ordering over the $n$ vertices. For each vertex $v \in V$, we define a vector $x_v \in \mathbb{Z}^{\binom{n}{2}}$ to be its corresponding edge-incidence vector, where the $\binom{n}{2}$ coordinates are indexed by all \emph{ordered} vertex pairs $(a,b)$ with $a \prec b$. Formally, the vector $x_v$ is defined coordinate-wise as follows:

$$x_v(a,b) := \begin{cases}
+1 & \text{if } a = v \text{ and } (a,b) \in E, \\
-1 & \text{if } b = v \text{ and } (a,b) \in E, \\
0 & \text{otherwise.}
\end{cases}$$

The key observation of this representation is that the support of $x_S = \sum_{v \in S} x_v$ corresponds exactly to the set of edges crossing the cut $(S, \overline{S})$, denoted by $E(S,\overline{S})$. As a result, querying the supports of $x_S$ yields a set of edges that cross the cut $(S, \overline{S})$. The construction of our sketch will utilize this observation by including multiple independent instances of the $\FS$ sketch over a set of vectors $\{x_v\}_{v \in V}$. Formally, we describe our sketches below. Note that we assume for simplicity that $k = 2^R$ is a power of two, though this assumption can be easily lifted.
\begin{table}[H]
    \centering
    \begin{tabular}{|p{15.5cm}|}
    \hline ~\\
    \multicolumn{1}{|c|}{\textbf{A sketch for $k$-connectivity certificate}} \\
    \vspace{1.5mm}
    We maintain a set of independent sketches $\mathcal{M}_0, \mathcal{M}_1,...,\mathcal{M}_R$ where:
    \begin{itemize}
        \item $\mathcal{M}_0$ is a black-box AGM sketch that computes a spanning forest with error $16n^{-6}$.
        \item For each $r \in [R]$, the sketch $\mathcal{M}_r$ is a $\FS(2^r, n, \binom{n}{2}, n^{-10})$ sketch over a set of vectors $x_1,...,x_n \in \mathbb{Z}^{\binom{n}{2}}$ which errs with probability $n^{-10}$ for each (non-adaptive) query. Notably, each sketch $\mathcal{M}_r$ consists of $n$ sub-sketches (as in \Cref{lem:find_support}) denoted by $\sk^{(r)}(v)$ for each $v \in V$.
    \end{itemize}
    \vspace{1mm}
    \\
    \hline
    \end{tabular}
    \caption{The sketch (i.e. memory contents) of our algorithm for $k$-connectivity certificate.}
    \label{table:elas-basic-ops}
\end{table}

\paragraph{Processing edge updates.} For any insertion of an edge $e = (u, v)$ with $u \prec v$, we update the sketches as follows. First, we update the AGM sketch $\mathcal{M}_0$ with the insertion of $e$. Then, for each subsequent sketch $\mathcal{M}_r$, we update the corresponding sub-sketches: we update coordinate $(u,v)$ in $\sk^{(r)}(u)$ with ``$+1$'', and update coordinate $(u,v)$ in $\sk^{(r)}(v)$ with ``$-1$''.

Similarly, for a deletion of the edge $e = (u,v)$, we perform the reverse operations. First, we update the AGM sketch $\mathcal{M}_0$ with the deletion of $e$. Then, for each subsequent sketch $\mathcal{M}_r$, we update the corresponding sub-sketches: we update coordinate $(u,v)$ in $\sk^{(r)}(u)$ with ``$-1$'', and update coordinate $(u,v)$ in $\sk^{(r)}(v)$ with ``$+1$''

The following claim argues the space complexity of our sketch.

\begin{claim} The sketch size is $O(n \log^2{n}\cdot \max\{k, \log n \log k\})$.
\end{claim}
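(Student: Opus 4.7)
The plan is to account for the space used by each component of the sketch separately and then sum them up. The sketch has two kinds of components: the AGM sketch $\mathcal{M}_0$, and the $R = \log_2 k$ many support-find sketches $\mathcal{M}_1, \ldots, \mathcal{M}_R$. Since $m = \binom{n}{2}$, we have $\log m = O(\log n)$, which I will use throughout.

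First, I would quote the known bound for the AGM sketch: with error $16n^{-6}$, it uses $O(n \log^3 n)$ bits (either via the original analysis of Ahn–Guha–McGregor or the improved $O(n \log^2 n \log(n/\delta))$ bound of Nelson–Yu cited in the introduction). Next, for each $r \in [R]$, \Cref{lem:find_support} applied with parameters $k = 2^r$, $m = \binom{n}{2}$, and $\delta = n^{-10}$ gives that $\mathcal{M}_r$ uses $O(t_r \cdot n \log^2 n)$ bits where $t_r = O(\max\{2^r, \log n\})$.

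The main step is summing these contributions over $r = 1, \ldots, R$. I would split the range of $r$ at the threshold $r_0 = \lceil \log_2 \log n \rceil$, below which $\max\{2^r, \log n\} = O(\log n)$ and above which $\max\{2^r, \log n\} = 2^r$. The ``small'' range contributes at most $\min\{R, r_0\} \cdot O(n \log^3 n)$, and since $\min\{R, r_0\} = \min\{\log k, \log \log n\} \leq \log k$ (noting that when $k \geq \log n$ we still have $\log \log n \leq \log k$), this is $O(n \log^2 n \cdot \log n \log k)$. The ``large'' range contributes, by a geometric series, $\sum_{r=r_0}^{R} O(2^r \cdot n \log^2 n) = O(2^R \cdot n \log^2 n) = O(kn \log^2 n)$.

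Combining the two halves yields $O(n \log^2 n \cdot (k + \log n \log k)) = O(n \log^2 n \cdot \max\{k, \log n \log k\})$, and the $O(n \log^3 n)$ cost of $\mathcal{M}_0$ is absorbed by the $\log n \log k$ term. There is no real obstacle here beyond bookkeeping; the only point requiring slight care is verifying that the sub-logarithmic range of $r$ (where the $k$-support-find sketch degrades to $\Omega(\log n)$ rather than $O(2^r)$ support-size cost) contributes at most $O(\log n \log k)$ and not more, which is exactly what the split at $r_0 = \lceil \log_2 \log n \rceil$ accomplishes.
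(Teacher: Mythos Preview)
Your proposal is correct and essentially matches the paper's proof. Both account for $\mathcal{M}_0$ as $O(n\log^3 n)$, invoke \Cref{lem:find_support} to get $t_r = O(\max\{2^r,\log n\})$, and then bound $\sum_{r\in[R]} t_r$ by $O(\max\{k,\log n\log k\})$; the only cosmetic difference is that the paper does a case split on whether $2^R \le 10\log n$, whereas you split the sum at the threshold $r_0=\lceil\log_2\log n\rceil$, which is the same computation.
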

\begin{proof} The first sketch $\mathcal{M}_0$ uses space $O(n \log^3 n)$ via the guarantee of the AGM sketch. For each $ r \in [R]$, following \Cref{lem:find_support}, the sketch $\mathcal{M}_r$ uses space $O(t_r n \log^2 n)$ where $t_r = \max\{2^r, 10 \log n\}$. Therefore, the total space is $O(n \log^2 n) \cdot \sum_{r \in [R]} t_r$. Now consider:
\begin{align*}
    \sum_{r \in [R]} t_r & = \sum_{r \in [R]} \max\{2^r, 10 \log n\} = \begin{cases} O(R \log n) & \text{; if $2^R \leq 10 \log n$} \\
    O(2^R + \log n \log \log n) & \text{; if $2^R > 10 \log n$}.\end{cases}
\end{align*}
With $2^R = k$, this calculation simplifies to $\sum_{r \in [R]} t_r \leq O(\max\{k, \log n \log k\})$. Hence, our sketch size becomes $O(n \log^2{n}\cdot \max\{k, \log n \log k\})$, as wished.
\end{proof}

\subsection{Querying a certificate}

Finally, we describe how the streaming algorithm constructs a $k$-connectivity certificate from its memory contents $\mathcal{M}_0, \mathcal{M}_1, \ldots, \mathcal{M}_R$. The algorithm proceeds inductively across $R = \log_2 k$ iterations. In the first iteration, we use $\mathcal{M}_0$ (i.e., the AGM sketch) to construct a spanning forest of $G$, which serves as a $1$-connectivity certificate. If the resulting forest is not connected, we have immediately found a certificate for a negative instance. Otherwise, we proceed to the next phase.

For each subsequent iteration $r \in [R]$, the algorithm attempts to augment the $2^{r-1}$-connectivity certificate obtained in the previous round to a $2^r$-connectivity certificate using the sketch $\mathcal{M}_r$. Specifically, we identify all cuts in the $2^{r-1}$-connectivity certificate (a subgraph of $G$) whose cut size is strictly less than $2^r$. For each such cut, we query $\mathcal{M}_r$ to retrieve a set of $2^r$ edges crossing it. Note that this set may include some of the edges we already found in the previous iterations. If \emph{strictly} fewer than $2^r \leq k$ such edges exist for any cut, we have discovered a small cut of size strictly less than $k$, thereby certifying that the input graph is not $k$-connected. Otherwise, we augment the subgraph by adding all such edges, ensuring that no cut of size less than $2^r$ remains.

This process continues for all $r = 1$ to $R$. If the algorithm does not detect any small cuts throughout the iterations, the resulting subgraph upon termination must be $k$-edge connected, thereby certifying a positive instance.



\renewcommand{\figurename}{Algorithm}

\begin{figure}[H]
    \centering
    \begin{tabular}{|p{16cm}|}
    \hline ~\\
   \multicolumn{1}{|c|}
     {} \\
    \vspace{-7.5mm} 
    \textbf{Algorithm 1:} Query a $k$-connectivity certificate. \\

    \textbf{Memory contents:} An AGM sketch $\mathcal{M}_0$, and support-find sketches $\mathcal{M}_1,\ldots, \mathcal{M}_r$. \\
    \textbf{Output:} A $k$-connectivity certificate of $G$. \\~\\

    \textbf{Procedures:}
    \begin{enumerate}
        \item query $\mathcal{M}_0$ (i.e. the AGM sketch) to obtain a spanning forest $G_0$.
        \item \textbf{if} $G_0$ is not connected \textbf{do}
        \item \hspace{0.5cm} let $C \subset V$ be a connected component of $G_0$. 
        \item \hspace{0.5cm} output the cut $(C,\overline{C})$ that has no crossing edges. Terminate immediately.
        \item \textbf{for} $r=1,\ldots, R$ \textbf{do}
        \item \hspace{0.5cm} list all $S \subseteq V$ whose cut size in $G_{r-1}$ is less than $2^r$. Let $\mathcal{S}_r$ be a set of those cuts.
        \item \hspace{0.5cm} \textbf{for} {$S \in \mathcal{S}_r$} \textbf{do}
        \item \hspace{1cm} query $\min\{2^r,|E(S, \overline{S})|\} $ edges in $E(S, \overline{S})$ via sketch $\mathcal{M}_r$. Let those edges be $E_S$.
        \item \hspace{1cm} \textbf{if} $|E_S| < 2^r$ \textbf{do}
        \item \hspace{1.5cm} output the cut $(S,\overline{S})$ and a set of crossing edges $E_S$. Terminate immediately.
        \item \hspace{0.5cm} update $G_{r} \leftarrow G_{r-1} \cup \left(\bigcup_{S \in \mathcal{S}_r} E_S\right)$
        \item output $G_R$
    \end{enumerate}
    \vspace{0mm}
    \\
    \hline
    \end{tabular}
    \caption{A query algorithm for computing a $k$-connectivity certificate.}
    \label{alg:k-conn-sketch}
\end{figure}

It only remains to justify the correctness and success probability of our algorithm. This is done via a combination of \Cref{clm:correctness_positive} (for positive instances) and \Cref{clm:correctness_negative} (for negative instances). In what follows, we refer to ``iteration $r$'' as in Line 6-11 of Algorithm \ref{alg:k-conn-sketch} with respect to the value of $r$.


\begin{claim} If $G$ is $k$-edge connected, then with probability $1-n^{-5}$, Algorithm \ref{alg:k-conn-sketch} outputs $G_R$ via Line 12 that is $k$-edge connected.
\label{clm:correctness_positive}
\end{claim}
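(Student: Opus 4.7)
The plan is to prove the claim by induction on $r$, showing that conditioned on every sketch answering correctly, the graph $G_r$ maintained by the algorithm is $2^r$-edge-connected. Then a union bound over the per-query failure events yields the $1-n^{-5}$ success probability.

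For the base case $r=0$, since $G$ is $k$-edge-connected and hence connected, the spanning forest returned by $\mathcal{M}_0$ is actually a spanning tree, so $G_0$ is $1$-edge-connected (and the algorithm does not terminate via Line~4). For the inductive step, assume $G_{r-1}$ is $2^{r-1}$-edge-connected. Every cut of $G_{r-1}$ is either (i) already of size $\geq 2^r$, in which case it remains of size $\geq 2^r$ in $G_r \supseteq G_{r-1}$, or (ii) listed in $\mathcal{S}_r$. For a cut $(S,\overline{S}) \in \mathcal{S}_r$, since $G$ is $k$-edge-connected and $2^r \leq k$, the true cut in $G$ satisfies $|E_G(S,\overline{S})| \geq k \geq 2^r$. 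Hence, assuming the query to $\mathcal{M}_r$ succeeds, it returns $2^r$ distinct edges of $G$ crossing $(S,\overline{S})$, so Line~10 does not trigger, and $E_S$ is added to $G_r$. Consequently $|E_{G_r}(S,\overline{S})| \geq |E_S| = 2^r$. Thus every cut of $G_r$ has size at least $2^r$, completing the induction. At $r=R$ we conclude that $G_R$ is $k$-edge-connected and that the algorithm exits via Line~12.

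The main quantitative obstacle is controlling the number of queries we make, in order to union-bound the failure probability. The key observation is that inside iteration $r$, every cut in $\mathcal{S}_r$ has size strictly less than $2^r = 2 \cdot 2^{r-1}$ in $G_{r-1}$, and by the inductive hypothesis the minimum cut of $G_{r-1}$ is exactly $2^{r-1}$. Hence each $S \in \mathcal{S}_r$ corresponds to a (strictly-less-than-)$2$-approximate minimum cut of $G_{r-1}$, and by Karger's bound on near-minimum cuts we have $|\mathcal{S}_r| \leq O(n^4)$. Crucially, although the cut family $\mathcal{S}_r$ depends on the outcomes of earlier iterations, within a single iteration $r$ the set $\mathcal{S}_r$ is fixed before any query to $\mathcal{M}_r$ is made, and $\mathcal{M}_r$ is independent of all other sketches. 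Thus the queries to $\mathcal{M}_r$ are non-adaptive in $\mathcal{M}_r$'s own randomness, which is exactly the regime covered by the remark following \Cref{lem:find_support}.

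Finally, I would assemble the probability bound: the AGM sketch $\mathcal{M}_0$ fails with probability at most $16 n^{-6}$; within each iteration $r \in [R]$, union-bounding over the $O(n^4)$ queries gives failure $O(n^{4} \cdot n^{-10}) = O(n^{-6})$; summing over the $R \leq \log_2 n$ iterations gives an overall failure probability of $O(n^{-6} \log n) + 16 n^{-6} \leq n^{-5}$ for sufficiently large $n$. On the complementary event all sketches answer correctly, the inductive argument above shows that the algorithm outputs a $k$-edge-connected $G_R$ via Line~12, as desired.
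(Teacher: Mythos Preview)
Your proposal is correct and follows essentially the same approach as the paper: an inductive argument that $G_r$ is $2^r$-edge-connected, combined with the Karger--Stein bound $|\mathcal{S}_r|\le O(n^4)$ on $2$-approximate min-cuts and the observation that the queries to each $\mathcal{M}_r$ are non-adaptive with respect to its own randomness, followed by a union bound. The only minor slip is that the inductive hypothesis gives mincut of $G_{r-1}$ \emph{at least} (not exactly) $2^{r-1}$, but this does not affect the Karger--Stein count.
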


\begin{proof} Define $\mathcal{E}_0$ to be the event that $G_0$ is connected, which occurs with probability at least $1-16n^{-6}$ via the guarantees of the AGM sketch. For each $r \in [R]$, let $\mathcal{E}_r$ be an event that the (1) the algorithm enters iteration $r$, (2) the algorithm completes iteration $r$, i.e. it does not terminate at Line 10, and (3) towards the end of the iteration, $G_r$ is $2^r$-connected. We will show that $\Pr(\mathcal{E}_{r} \mid \mathcal{E}_{\leq r-1}) \geq 1-16n^{-6}$ for any $r \in [R]$.

Consider any iteration $r \geq 1$. Conditioning on $\mathcal{E}_{r-1}$ (and $\mathcal{E}_{<r-1}$ although unnecessary) implies that upon finishing iteration $r-1$, the graph $G_{r-1}$ is $2^{r-1}$-connected. Now consider iteration $r$. For each $S \in \mathcal{S}_r$, we can query $E_S := \min\{2^r, |E(S,\overline{S})|\}$ edges among $E_G(S, \overline{S})$ via querying $x_S$ to the sketch $\mathcal{M}_r$. This succeeds with probability at least $1-n^{-10}$. Furthermore, since $G$ is $k$-edge connected, we have $|E(S,\overline{S})| \geq k \geq 2^r$. Hence, when it is the case that such query is correct, we always have $|E_S| = 2^r$; hence, the algorithm does not terminate at Line 10.

Moreover, observe that the queries made in iteration $r$ to the sketch $\mathcal{M}_r$ is non-adaptive. This allows us to perform the ``union bound'' over all queries. Furthermore, we claim that the number of such non-adaptive queries is $|\mathcal{S}_r| \leq 16n^4$.
This is because the min-cut size of $G_{r-1}$ is at least $2^{r-1}$, hence, a cut of size at most $2^r$ is a $2$-approximate min-cut.
\cite{KargerS96} shows that for any $\alpha \geq 1$, the number of $\alpha$-approximate min-cuts is bounded by $(2n)^{2\alpha}$.
Via union bounds, $\mathcal{M}_r$ answers all queries correctly with probability at least $1-16n^{-6}$. When that is the case, $G_r = G_{r-1} \cup \left(\bigcup_{S \in \mathcal{S}_r} \mathcal{E}_S\right)$ has no cuts of size smaller than $2^r$ remained; thus, $G_r$ is $2^r$-connected. This proves that $\Pr\left(\mathcal{E}_{r} \mid \mathcal{E}_{\leq r-1}\right) \geq 1-16n^{-6}$.

Finally, notice that if $\mathcal{E}_0 \land \ldots \land \mathcal{E}_R$, the algorithm shall eventually output $G_R$ via Line 12 that is $k$-edge connected. This occurs with probability 
\begin{align*}
    \Pr(\mathcal{E}_0 \land \ldots \land \mathcal{E}_R) & = \prod_{r = 0}^R \Pr(\mathcal{E}_{r} \mid \mathcal{E}_{\leq r-1}) \geq (1-16n^{-6})^{R+1} \geq 1-n^{-5},
\end{align*}
as wished.
\end{proof}

For $G$ that is \emph{not} $k$-connected, we have the following claim.

\begin{claim} With probability $1-n^{-5}$, one of the following occurs.
\begin{enumerate}
    \item If $G$ is not connected, then Algorithm \ref{alg:k-conn-sketch} outputs a cut $(C,\overline{C})$ that has no crossing edges via Line 4.
    \item If $G$ is connected but is not $k$-edge connected, then at some point Algorithm \ref{alg:k-conn-sketch} outputs a cut $(S,\overline{S})$ via Line 10 whose cut size is less than $k$, along with the list of the crossing edges.
\end{enumerate}
\label{clm:correctness_negative}
\end{claim}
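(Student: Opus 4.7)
The plan is to closely parallel the proof of Claim~\ref{clm:correctness_positive}, splitting according to whether $G$ is already disconnected. For item~(1), if $G$ is disconnected then the AGM sketch $\mathcal{M}_0$ still outputs a genuine spanning forest $G_0$ of $G$ with probability at least $1 - 16 n^{-6}$; since $G$ has no spanning tree, $G_0$ must itself be disconnected, so Line~4 fires and outputs a valid negative certificate.

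For item~(2), let $\lambda < k$ denote the edge-connectivity of $G$ and fix an actual minimum cut $(S^\ast, \overline{S^\ast})$ with $|E_G(S^\ast, \overline{S^\ast})| = \lambda$. Let $r^\ast \in [R]$ be the smallest integer with $2^{r^\ast} > \lambda$, which exists because $\lambda < k = 2^R$. For every $r < r^\ast$ we have $2^r \le 2^{r^\ast - 1} \le \lambda$, so $G$ is $2^r$-edge-connected. Consequently, the events $\mathcal{E}_0, \ldots, \mathcal{E}_{r^\ast - 1}$ defined in the proof of Claim~\ref{clm:correctness_positive} apply verbatim, and I would re-invoke that analysis to conclude that with probability at least $1 - r^\ast \cdot 16 n^{-6}$ the algorithm reaches the beginning of iteration $r^\ast$ with $G_{r^\ast - 1}$ being $2^{r^\ast - 1}$-edge-connected and without having terminated.

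At iteration $r^\ast$, since $G_{r^\ast - 1} \subseteq G$ we have $|E_{G_{r^\ast - 1}}(S^\ast, \overline{S^\ast})| \le \lambda < 2^{r^\ast}$, so $S^\ast \in \mathcal{S}_{r^\ast}$. Union-bounding over the at most $16 n^{4}$ non-adaptive queries on $\mathcal{M}_{r^\ast}$ (using the $2$-approximate min-cut count bound of \cite{KargerS96} exactly as in the positive case), all queries are answered correctly with probability at least $1 - 16 n^{-6}$. In this event, the query on $S^\ast$ returns $\min\{2^{r^\ast}, \lambda\} = \lambda$ edges, so $|E_{S^\ast}| = \lambda < 2^{r^\ast}$ and Line~10 fires no later than when $S^\ast$ is processed, outputting a cut together with its complete list of $G$-crossing edges of size strictly less than $2^{r^\ast} \le k$. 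Summing failure probabilities across the at most $R + 1 = O(\log n)$ iterations gives a total failure probability of at most $n^{-5}$, as desired.

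The main subtlety to address is that the algorithm could in principle terminate at Line~10 at some earlier iteration $r < r^\ast$ before reaching $r^\ast$. However, conditioned on the good event that all queries are correct, $|E_S| < 2^r$ forces $|E_S| = |E_G(S,\overline{S})|$, so any such early termination still outputs a cut of $G$-size at most $2^r - 1 < k$ together with its full list of crossing edges, i.e., a valid negative certificate. Hence the argument concludes identically in either case, and no additional work is needed beyond recycling the bookkeeping from Claim~\ref{clm:correctness_positive}.
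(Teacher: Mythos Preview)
Your proposal is correct and mirrors the paper's proof essentially step for step. The only cosmetic difference is that your final paragraph frames the subtlety as early termination at some iteration $r < r^\ast$, whereas your own events $\mathcal{E}_{<r^\ast}$ (and the paper) already rule that out; the case that actually needs the argument ``$|E_S| < 2^r$ forces $E_S = E_G(S,\overline{S})$'' is termination within iteration $r^\ast$ at some $S \ne S^\ast$ before $S^\ast$ is processed, which the paper isolates explicitly and which your reasoning covers equally well.
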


\begin{proof} We condition on the event that the AGM sketch $\mathcal{M}_0$ produces a spanning forest $G_0$ correctly which happens with probability $1-16n^{-6}$. If $G$ is not connected, the cut $(C,\overline{C})$ via Line 4 does not have any crossing edges; thus, is a correct certificate. We then terminate immediately.

Otherwise, $G$ is connected. In this case, we proceed to Line 5. Let $\lambda \in  [k-1]$ be the value of $G$'s minimum cut which is attained by a min-cut $(S', \overline{S'})$. Denote $r \in [R]$ be such that $2^{r-1} \leq \lambda < 2^{r}$. We can show that all queries in rounds $\leq r-1$ are answered correctly with probability at least $(1-16n^{-6})^{r}$. We further condition of that occurs. As a result, upon the completion of iteration $r-1$, the algorithm has not terminated, and $G_{r-1}$ is $2^{r-1}$-edge connected.\footnote{The proofs are nearly identical to the proof of \Cref{clm:correctness_positive}; thus we shall omit due to succinctness.}

Consider the next iteration $r$. Since $|\mathcal{S}_r| \leq 16n^4$, a union bound implies that every query in this iteration is also answered correctly with probability at least $1 - 16n^{-6}$. We further condition on this event. To this point, all the conditioning events occur with probability at least $(1 - 16n^{-6})^{r + 1} \geq 1 - n^{-5}$. For the remainder of this proof, we will show that when that is the case, the algorithm correctly outputs a negative certificate.

We first show that $S' \in \mathcal{S}_r$. To see this, observe that conditioned on all prior queries in iterations $\leq r - 1$ being answered correctly, $G_{r - 1}$ is a subgraph of $G$. Hence, the cut size of $(S', \overline{S'})$ in $G_{r - 1}$ is at most $|E(S', \overline{S'})| = \lambda < 2^r$, implying that $S' \in \mathcal{S}_r$.

Next, we claim that the algorithm must terminate during iteration $r$, while processing some $S \in \mathcal{S}_r$. If it terminates before reaching $S = S'$, we are done. Otherwise, consider the point where $S = S'$. Since the query correctly returns $\min\{2^r, |E(S', \overline{S'})|\} = \min\{2^r, \lambda\} = \lambda$ edges from $E(S', \overline{S'})$, we must have $E_{S'} = E(S', \overline{S'})$ of size $ \lambda < 2^r$. Therefore, the algorithm must terminate.

Finally, we claim that whenever the algorithm terminates in iteration $r$ with respect to some $S \in \mathcal{S}_r$ (which may or may not be the same as $S'$), it correctly outputs a negative certificate at Line 10. Again, we rely on the fact that the query correctly returns $E_S :=$ the set of $\min\{2^r, |E(S, \overline{S})|\}$ edges from $E(S, \overline{S})$. The fact that the algorithm terminates implies that $|E_S| < 2^r \leq k$ which also implies $E_S = E(S, \overline{S})$. Thus, the algorithm shall output a cut $(S, \overline{S})$ of size less than $k$ along with the set of crossing edges $E_S = E(S, \overline{S})$. This constitutes a valid negative certificate.
\end{proof}

\section{Open Problems}

The most immediate open question is closing the gap for computing $k$-connectivity certificates in dynamic streams, which exists only when $k = o(\log n \log \log n)$. In this regime, our \Cref{main_thm} gives an upper bound of $O(n \log^3 n \log k)$, while a lower bound of $\Omega(n \log^3 n)$ follows from \cite{NelsonY19}, leaving a gap of just $O(\log k) = O(\log \log n)$. Notably, the same gap appears in the distributed sketching model.

We can also consider the decision variant which asks whether the graph is $k$-connected. Using simple reductions from \Cref{main_thm} and \Cref{main_dist}, we obtain algorithms with matching upper bounds. However, the only known lower bound is in the deterministic distributed sketching model, established by a very recent work of Robinson and Tan \cite{RobinsonT25} who prove an $\Omega(k)$ lower bound on the worst-case message length. Establishing lower bounds in other settings—distributed or streaming, deterministic or randomized—remains an intriguing open question.

A special case of such a problem arises when $k = 1$, reducing the task to deciding whether the graph is connected. The AGM sketch by Ahn, Guha, and McGregor \cite{AhnGM12} implies an $O(n \log^3 n)$-space algorithm for this problem in the dynamic streaming model. However, whether this space bound is optimal remains an open question. The best known lower bound to date is $\Omega(n \log n)$, due to Sun and Woodruff \cite{SunW15}, which also holds for insertion-only streams.

Finally, the approach of \cite{AhnGM12} naturally extends to the distributed sketching model, achieving a worst-case message length of $O(\log^3 n)$. More recently, Yu \cite{Yu21} proved that this bound is tight in terms of \emph{average} message length, thereby settling the question in the randomized setting. However, no nontrivial \emph{deterministic} algorithm is currently known—beyond the naive strategy of having each vertex send its entire neighborhood to the referee, nor any non-trivial lower bounds.

\section{Acknowledgement}

We thank the anonymous SOSA 2026 reviewers for their valuable feedback, most of which has been incorporated into this draft.

\printbibliography


\end{document}